\journal{Journal of \LaTeX\ Templates}
\begin{document}

\begin{frontmatter}

\title{Towards a Homotopy Domain Theory\tnoteref{mytitlenote}}


\author{Daniel O. Martínez-Rivillas}

\author{Ruy J.G.B. de Queiroz}


\begin{abstract}
An appropriate framework is put forward for the construction of $\lambda$-models with $\infty$-groupoid structure, which we call \textit{homotopic $\lambda$-models}, through the use of an $\infty$-category with cartesian closure and enough points. With this, we establish the start of a project of generalization of Domain Theory and $\lambda$-calculus, in the sense that the concept of proof (path) of equality of $\lambda$-terms is raised to \textit{higher proof} (homotopy).
\end{abstract}

\begin{keyword}
Lambda calculus\sep Homotopic lambda model\sep Kan complex\sep Infinity groupoid \sep Infinity category
\MSC[2020] 03B70
\end{keyword}

\end{frontmatter}


\newtheorem{defin}{Definition}[section]
\newtheorem{teor}{Theorem}[section]
\newtheorem{corol}{Corollary}[section]
\newtheorem{prop}{Proposition}[section]
\newtheorem{rem}{Remark}[section]
\newtheorem{lem}{Lemma}[section]
\newtheorem{nota}{Notation}[section]
\newtheorem{ejem}{Example}[section]

\section{Introduction}

The purpose of this paper is to give a framework for building a lambda model endowed with a topology, such that any proof of $\beta$-equality between $\lambda$-terms is not represented by equality between points (extensional equality), rather by the existence of a continuous path between the terms (intensional equality), where the interpretation of these terms corresponds to two points in the space. As an example, given the $\beta$-equality between different $\lambda$-terms
$$(\lambda x.(\lambda y.yx)z)v=_{\beta}zv,$$ 
these terms are taken to be $\beta$-equal because there is a proof $p_1$ determined by a finite sequence of $\beta$-contractions ($\vartriangleright_{1\beta}$) or inverse $\beta$-contractions ($\vartriangleleft_{1\beta}$), possibly with $\alpha$-conversions, which allows for connecting the terms $\lambda x.(\lambda y.yx)zv$ and $zv$, hence 
$$(\lambda x.(\lambda y.yx)z)v\vartriangleright_{1\beta}(\lambda y.yv)z\vartriangleright_{1\beta}zv.$$

Now, the problem is to build a topological model, such that the interpretations of $\lambda$-terms $(\lambda x.(\lambda y.yx)z)v$ and $zv$  are different points, and the proof $p_1$ is a continuous path which connects both
points. This will be used to establish when two proofs (two continuous paths) of a $\beta$-equality between different terms (different points) are ``equal" (homotopic). Hence, in the example, given an second proof $p_2$ which correspond to finite sequence 
$$(\lambda x.(\lambda y.yx)z)v\vartriangleright_{1\beta}(\lambda x.zx)v\vartriangleright_{1\beta}zv,$$
one has that $p_1$ and $p_2$ are two different proofs, so in the model these interpretations should be two different continuous paths. But, would these proof interpretations be homotopically equal? If in the $\lambda$-calculus we call $\beta$-\textit{homotopy} any homotopy of the model,  when are two different $\beta$-homotopies to be declared ``equal"? This can be iterated, and by answering these questions, we could define in the $\lambda$-calculus a  theory of higher $\beta$-equality, with the help of  higher homotopies in the $\lambda$-model. 

\medskip Therefore, this aforementioned theory of higher $\beta$-equality has a structure of a non-trivial $\infty$-groupoid, which extends $\lambda$-calculus to a type-free version of the Homotopy Type Theory (HoTT) \cite{Martinez22}, but with equality relations based on (type-free) computational paths\footnote{If $a,b$ are terms of type $A$, a computational path $s$ from $a$ to $b$ is a composition of rewrites (each rewrite is an application of the inference rules of the equality theory of Martin-L\"of's type theory). One denotes that by $a=_sb$ (see \cite{Queiroz} and \cite{Ramos2017}).}.  Whose advantage is that the $\beta\eta$-conversions  are not equalities of judgment ($a= b:A$), as  in HoTT, but those are intentional equalities ($a=_sb:A$), which could better preserve the information than HoTT does.

\medskip The initiative to search for $\lambda$-models with a $\infty$-groupoid structure emerged in \cite{Martinez} (called \textit{homotopic $\lambda$-models}), which studied the geometry of any complete partial order (c.p.o) (e.g., $D_\infty$), and found that the topology inherent in these models generated trivial higher-order groups. From that moment on, the need arose to look for a
type of model that could present a rich geometric structure, where their higher-order fundamental
groups would not collapse. In this sense, we will gain the semantics of a type-free theory from a version of HoTT based on computational paths, which can distinguish different proofs of equality of $\lambda$-terms.

\medskip According to Quillen's Theorem, each CW complex topological space is homotopically equivalent to a Kan complex\footnote{To ensure the consistency of HoTT, Voevodsky \cite{Voevodsky} (see \cite{Lumsdaine} for higher inductive types) proved that Homotopy Type Theory (HoTT) has a model in the category of Kan complexes. (See \cite{DBLP:books/mk/Univalent13}, p.11)} ($\infty$-groupoid), and, conversely, each Kan complex is homotopically equivalent to a CW complex. Then, instead of working directly with topological spaces, we are going to work with Kan complexes, which are $\infty$-categories  \cite{DBLP:books/mk/Lurie} whose 1-simplexes or edges are weakly invertible. Or, in other words:
  
\begin{defin}[\cite{DBLP:books/mk/Lurie}]
	A  simplicial set $K$ is a Kan complex if for any $0\leq i\leq n$, any map $f_0:\Lambda_i^n\rightarrow K$ admits an extension $f:\Delta^n\rightarrow K$.
\end{defin}

Where the simplicial set $K$ is defined as a presheaf $\Delta^{op}\rightarrow Set$, with $\Delta$ being the \textit{simplicial indexing category}, whose objects are finite ordinals $[n]=\{0,1,\ldots,n\}$, and morphisms are the (non strictly) order preserving maps. $\Delta^n$ is the \textit{standard $n$-simplex} defined for each $n\geq 0$ as the simplicial set $\Delta^n:=\Delta(-,[n])$. And $\Lambda_i^n$ is a \textit{horn} defined as largest subobject of $\Delta^n$ that
does not include the face opposing the $i$-th vertex (see Section \ref{sub:simplicial-sets}).

\medskip Finally, to find Kan complexes that model  $\lambda$-calculus, the strategy would be to generalize the procedure used in \cite{Hyland10}, where to show a way to find categories that  model $\lambda$-calculus, through  the possible solution of domain equations, which are posed on a bicategory with desirable properties of cartesian closure and enough points.

\medskip But, before proposing an $\infty$-category with the properties of cartesian closure and having enough points, we first explore in Section 2 some consequences of the homotopic $\lambda$-models introduced in \cite{Martinez}. In  Section 3, we define the homotopy $\lambda$-model on a cartesian closed $\infty$-category. In Section 4, we adopt the notion of Kleisli structure to the case of the $\infty$-categories and we define the Kleisli $\infty$-category  of a structure. And, finally, in Section 5, we propose an $\infty$-category and we prove that it is closed cartesian and has enough points.

\subsection{Simplicial sets}
\label{sub:simplicial-sets}

For a better understanding of the definitions and basic results on $\infty$-categories which are necessary for the development of this work, we present some    notions on simplicial sets \cite{DBLP:books/mk/Goerss09}.

\begin{defin}[Simplicial indexing category]
	Let $\Delta$ be the category as follows. The objects  are finite ordinals $[n]=\{0,1,\ldots,n\}$,  $n\geq 0$, and  morphisms are the (non strictly) order preserving maps. Morphisms in $\Delta$ are often called simplicial operators.
\end{defin}

\begin{rem}
	There are a coface operator $d^i: [n-1]\rightarrow [n]$, which skips the i-th element and a codegeneracy operator $s^i: [n+1]\rightarrow [n]$, which maps $i$ and $i+1$ to the same element. All operator $f^{\ast}$ in $\Delta$ can be obtained as a cocomposition of coface and codegeneracy operators.
\end{rem}

\begin{defin}[Simplicial set]
	A simplicial set $X$ is a functor $X : \Delta^{op}\rightarrow Set$ (or presheaf). A simplicial
	morphism is just a natural transformation of functors. The category of the simplicial sets $Fun(\Delta^{op},Set)$ will be denoted by $sSet$ or $Set_\Delta$.
\end{defin}

It is typical to write $X_n$ for $X([n])$, and call it the set of $n$-simplexes in $X$.

\begin{rem}
	Given a simplex $a\in X_n$ and a simplicial operator $f^{\ast}: [m]\rightarrow [n]$,  the function $f:X_n\rightarrow X_m$ is given by $f(a):=X(f^{\ast})(a)$. In this explicit language, a simplicial set consists of
	\begin{itemize}
		\item a sequence of sets $X_ 0 , X _1 , X_2,\ldots$,
		\item functions $f: X_n\rightarrow X_m$ for each simplicial operator $f^{\ast}: [m]\rightarrow [n]$. 
	\end{itemize}
	For the coface operator $d^i:[n-1]\rightarrow [n]$, the face map is denoted by $d_i:X_n\rightarrow X_{n+1}$, $0\leq i\leq n$. For the codegeneracy operator $s^i:[n+1]\rightarrow [n]$, the degeneracy map is written by $s_i:X_n\rightarrow X_{n+1}$, $0\leq i\leq n$.
\end{rem} 

\begin{defin}[Product of simplicial sets \cite{Friedman2012}]
	Let $X$ and $Y$ be simplicial sets. Their product $X\times Y$ is defined by
	\begin{enumerate}
		\item $(X\times Y)_n=X_n\times Y_n=\{(x,y)\,|\,x\in X_n,\,y\in Y_n \}$,
		\item if $(x,y)\in(X\times Y)_n$, then $d_i (x,y) = (d_i x, d_i y)$, 
		\item if $(x,y)\in(X\times Y)_n$, then $s_i (x,y) = (s_i x, s_i y)$. 
	\end{enumerate}
\end{defin}

Notice that there are evident projection maps $\pi_1 : X \times Y\rightarrow X$ and $\pi_2 : X\times Y\rightarrow Y$ given by $\pi_1 (x, y) = x$ and $\pi_2(x, y) = y$. These maps are clearly simplicial morphisms.

\begin{defin}[Standard $n$-simplex]
	The standard $n$-simplex $\Delta^n$ is the simplicial set defined by
	$$\Delta^n:=\Delta(-,[n]).$$
	That is, the standard $n$-simplex is exactly the functor represented by the object $[n]$.
\end{defin}

The standard 0-simplex $\Delta^0$ is the
terminal object in $sSet$; i.e., for every simplicial set $X$ there is a unique map $X\rightarrow\Delta^0$. Sometimes we
write $\ast$ instead of $\Delta^0$ for this object. The empty simplicial set $\emptyset$ is the functor $\Delta^{op}\rightarrow Set$ sending each $[n]$ to the empty set. It is the initial object in $sSet$, i.e., for every simplicial set $X$ there is a unique map $\emptyset\rightarrow X$. Besides, there is a bijection $sSet(\Delta^n,X)\cong X_n$; applying the Yoneda Lemma to category $\Delta$ \cite{DBLP:books/mk/Cisinski}.

\medskip A graphical representation of the convex hull of $\Delta^n$ is made up of the by $n+1$ vertices $0,1,\ldots,n$ and the faces are the injective simplicial operators $d^i:[n-1]\rightarrow [n]$, which are called non-degenerated. As seen in the Figure \ref{n-simplexes} for the first four dimensions.

\begin{figure}[ht!]
\centering

\caption{\textmd{Standard $n$-simplexes from $n=0$ to $n=3$}}
\label{n-simplexes}
\fcolorbox{gray}{white}{\includegraphics[width=0.7\textwidth]{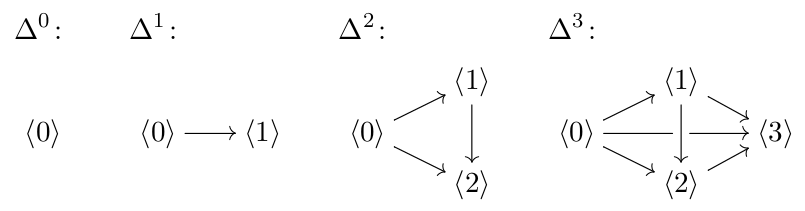}}

\par\selectfont\textbf{{\footnotesize Sink:}} \cite{DBLP:books/mk/Rezk17}  \par
\end{figure} 

\begin{defin}[\cite{DBLP:books/mk/Goerss09}]
	For two simplicial sets $X$, $Y$ we have a mapping  simplicial set, $Map(X,Y)$ defined as:
	$$Map(X,Y)_n := sSet(X \times \Delta^n,Y).$$	
\end{defin}

Note that in particular $Map(X, Y )_0 = sSet(X\times \Delta^0 ,Y)\cong sSet(X, Y)$ (bijection of sets). Sometimes to simplify notation, the simplicial set $Map(X,Y)$ will be written as $X^Y$ or $[X\rightarrow Y]$.

\medskip Next,  a collection of subobjects of the standard simplexes, called “horns” is defined.

\begin{defin}[Horns]
	For each $n \geq 1$, there are subcomplexes $\Lambda_i^n\subset\Delta^n$  for each $0\leq i\leq n$. The horn  $\Lambda_i^n$ is the
	subcomplex of $\Delta^n$ such that this is the largest subobject that does
	not include the face opposing the $i$-th vertex.
	
	\medskip When $0<i<n$ one says that $\Lambda_i^n\subset\Delta^n$ 
	is an inner horn. One also says that it is a left horn if $i<n$ and a right horn if $0<i$.
\end{defin}	

For example, the horns inside $\Delta^1$ are just the vertices: the left horn, the right horn $\Lambda_0^1=\{0\}\subset\Delta^1$ and $\Lambda_1^1=\{1\}\subset\Delta^1$. Neither is an inner horn.  

\bigskip Other example. $\Delta^2$ have three horns: The left horn $\Lambda_0^2$, the internal horn $\Lambda_1^2$ and the right horn $\Lambda_1^2$,  see Figure \ref{Horns}.

\begin{figure}[ht!]
\centering

\caption{\textmd{Horns inside $\Delta^2$}}
\label{Horns}
\fcolorbox{gray}{white}{\includegraphics[width=0.6\textwidth]{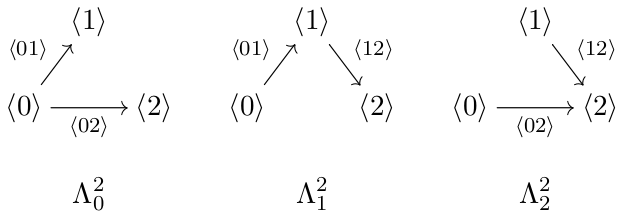}}

\par\selectfont\textbf{{\footnotesize Sink:}} \cite{DBLP:books/mk/Rezk17}  \par
\end{figure}

\subsection{Definition of $\infty$-category and Kan complex}
\label{sub:Definiton-infty-categoy-Kan-complex}

\begin{defin}[$\infty$-category \cite{DBLP:books/mk/Lurie}]
	An $\infty$-category is a simplicial set $X$ which has the following property: for any $0<i<n$, any map $f_0:\Lambda_i^n\rightarrow X$ admits an
	extension $f:\Delta^n\rightarrow X$.	
\end{defin}

\begin{defin}
	From the definition above, we have the following special cases:
	\begin{itemize}
		\item $X$ is a Kan complex if there is extension for each $0\leq i\leq n$.
		\item $X$ is a category if the extension exists uniquely.
		\item $X$ is a groupoid if the extension exists for all $0\leq i\leq n$ and is unique. 
	\end{itemize}
\end{defin}

Next is the definition of cartesian product of $\infty$-categories, which generalizes the cartesian product of categories.

\begin{defin}[Cartesian product]
	A product of $\infty$-categories is the product of the underlying simplicial sets. Thus, $(X\times Y )_n = X_n\times Y_n$ for each $n\geq 0$.
\end{defin}

By the bijective correspondence between the set $sSet(K,X\times Y)$ and $sSet(K\rightarrow X,K\rightarrow Y)$ one has the following proposition (see \cite{DBLP:books/mk/Rezk17}).

\begin{prop}
	The product of two $\infty$-categories (as simplicial sets) is an $\infty$-category.
\end{prop}

\subsection{Categorical constructions in $\infty$-categories}

Next, one approaches the $\infty$-categories from the basic notions of the classical categories.

\begin{defin}
	A functor of $\infty$-categories $X\rightarrow Y$ is exactly a morphism of simplicial sets. Thus, $Fun(X,Y)=Map(X,Y)$ must be a simplicial set of the functors from $X$ to $Y$. 
\end{defin}

\begin{nota}
	The notation $Map(X,Y)$  is normally used for simplicial sets, while $Fun(X,Y)$ is for $\infty$-categories. One will refer to morphisms in $Fun(X, Y)$ as natural transformations of functors,
	and equivalences in $Fun(X, Y)$ as natural equivalences. 
	
	\medskip The composition of $n$-simplex $f:\Delta^n\rightarrow X$ (or $f\in X_n$) with a functor $F:X\rightarrow Y$, will be denoted as the image $F(f)\in Y_n$, where $n\geq 0$.
	
	\medskip A 0-simplex or vertex $x:\Delta^0\rightarrow X$ will be denoted as an object $x\in X$ in the $\infty$-category $X$.
	
	\medskip A 1-simplex $f:\Delta^1\rightarrow X$, such that $f(0)=x$ and $f(1)=y$ will be denoted as a morphism $f:x\rightarrow y$ in the $\infty$-category $X$.
	
	\medskip An inner horn $\Lambda_1^2\rightarrow X$, which corresponds to composable morphisms $x\xrightarrow{f}y\xrightarrow{g}z$
	in the $\infty$-category $X$, will be denoted by $(g,-,f)$ or in some cases to simplify notation it will be denoted by $g.f$.
\end{nota}

\begin{prop}[\cite{DBLP:books/mk/Lurie} and \cite{DBLP:books/mk/Cisinski}]
	For every $\infty$-category $Y$, the simplicial set $Fun(X, Y)$ is an $\infty$-category.
\end{prop}

The proof of the following theorem can be found in \cite{DBLP:books/mk/Lurie} and \cite{DBLP:books/mk/Cisinski}.

\begin{teor}[Joyal]
	A simplicial set $X$ is an $\infty$-category if and only if the canonical morphism 
	$$Fun(\Delta^2,X)\rightarrow Fun(\Lambda_1^2,X),$$
	is a trivial fibration. Thus, each fibre of this morphism is contractible.
\end{teor}

The above theorem guarantees the laws of coherence of the composition of 1-simplexes or morphisms of an $\infty$-category. This means that the composition of morphisms is unique up to homotopy, i.e., the composition is well-defined up to a space of choices is contractible (equivalent to $\Delta^0$).

\medskip With respect to the Kan complexes, we have the following equivalence.

\begin{prop}[Homotopy extension lifting property \cite{DBLP:books/mk/Lurie}]
	The simplicial set $X$ is a Kan complex if and only if the induced map	
	$$Fun(\Delta^1,X)\rightarrow Fun(\{0\},X)$$
	is a trivial fibration of simplicial sets.
\end{prop}

\begin{defin}[Space of morphisms \cite{DBLP:books/mk/Rezk17}]
	For two vertices $x,y$ in an $\infty$-category $X$, define the space of morphisms $X(x,y)$ by the following pullback diagram:	
	\[\xymatrix{
		& {X(x,y)}\ar[d]_{}\ar[r]^{} &
		Fun(\Delta^1,X) \ar[d]^{(s,t)}\\
		& \Delta^0\ar[r]_{(x,y)} & X\times X
		& 
	}\]
\end{defin}

\begin{prop}[\cite{DBLP:books/mk/Lurie} and \cite{DBLP:books/mk/Rezk17}]
	The morphism spaces $X(x,y)$ are Kan complexes.
\end{prop}

\subsection{Equivalences in $\infty$-categories}

In category theory, we have the concept of isomorphism of objects. For the case of the $\infty$-categories, we will have the equivalence of objects (vertices) in the following sense.  

\begin{defin}[Equivalent vertices]
	A morphism (1-simplex) $f:x\rightarrow y$ in an $\infty$-category $X$ is invertible (an equivalence) if there is a morphism $g:y\rightarrow x$ in $X$, a pair of 2-simplexes $\alpha,\beta\in X_2$ such that 
	$(g,-,f)\xrightarrow{\alpha}1_x$ and $(f,-,g)\xrightarrow{\beta}1_y$, i.e., if the diagram
	\[\xymatrix{
		& {x}\ar[dd]_{1_x}\ar[rr]^f& &
		y \ar[dd]^{1_y}\ar@{-->}[lldd]^g\\ \\
		& x\ar[rr]_{f}& & y
		& 
	}\]
	commutes under the 2-simplexes $\alpha$ and $\beta$.	
\end{defin}

\begin{teor}[\cite{DBLP:books/mk/Lurie} and \cite{DBLP:books/mk/Cisinski}]
	Let $X$ be an $\infty$-category. The following are equivalent
	\begin{enumerate}
		\item Every morphism (1-simplex) in $X$ is an equivalence.
		\item  $X$ is a Kan complex.
	\end{enumerate}
\end{teor}

\subsection{Natural transformations and natural equivalence}

\begin{defin}[\cite{DBLP:books/mk/Cisinski} and \cite{DBLP:books/mk/Rezk17}]
	If $X$ and $Y$ are $\infty$-categories, and if $F, G : X\rightarrow Y$ are two functors, a natural
	transformation from $F$ to $G$ is a map $H : X\times \Delta^1\rightarrow Y$ such that
	$$H(x,0)=F(x),\hspace{0.5cm}H(x,1)=G(x),$$
	for each vertex $x\in X$. Such a natural transformation is invertible or it is a natural equivalence if for any vertex $x\in X$, the
	induced morphism $F(x)\rightarrow G(x)$ (corresponding to the restriction of $H$ to
	$\Delta^1 \cong \{x\}\times\Delta^1$) is invertible in $Y$. If there is a natural equivalence from $F$ to $G$, we write $F\simeq G$.
\end{defin}

\begin{rem}
	This means that for each vertex $x\in X$, one chooses a morphism $H_x:F(x)\rightarrow G(x)$ such that the following diagram 
	\[\xymatrix{
		& {F(x)}\ar[dd]_{F(f)}\ar[rr]^{H_x}\ar@{-->}[rrdd]^g& &
		G(x) \ar[dd]^{G(f)}\\ \\
		& F(x')\ar[rr]_{H_{x'}}& & G(x')
		& 
	}\]
	commutes under the 2-simplexes $\alpha:g\rightarrow(G(f),-,H_x)$ and $\beta:(H_{x'},-,F(f))\rightarrow g$.
\end{rem}

\subsection{Categorical equivalences and homotopy equivalences}

\begin{defin}[Categorical equivalence \cite{DBLP:books/mk/Rezk17} and \cite{DBLP:books/mk/Lurie}]
	A functor of $\infty$-categories $F:X\rightarrow Y$ is a categorical equivalence if there is another functor $G:Y\rightarrow X$, such that $GF\simeq 1_X$ and $FG\simeq 1_Y$.	
\end{defin}

\begin{rem}
	From the definition above, if  $F:X\rightarrow Y$ is a functor of Kan complexes, we say that $F$ is a homotopy equivalence.
\end{rem}

\begin{lem}[\cite{DBLP:books/mk/Rezk17} and \cite{DBLP:books/mk/Cisinski}]
	A functor of $\infty$-categories $F:X\rightarrow Y$ is a categorical equivalence
	if it satisfies the following two conditions:
	\begin{itemize}
		\item Fully Faithful (Embedding): For two objects $x,y\in X$ the induced functor of Kan complexes
		$$X(x,y)\rightarrow Y(Fx,Fy),$$
		is a homotopy equivalence.
		\item Essentially Surjective: For every object $y\in Y$ there exists an object $x\in Y$ such that $Fx$ is equivalent to $y$.
	\end{itemize}
\end{lem} 

\subsection{The join of $\infty$-categories}
\label{sub:section3}

Next, the extension from join of categories to $\infty$-categories. This will enable us to define limit and colimit in an $\infty$-category.

\begin{defin}[Join \cite{DBLP:books/mk/Lurie}]
	Let $K$ and $L$ be simplicial sets. The join $K\star L$ is the simplicial set defined by
	$$(K\star L)_n:=K_n\cup L_n\cup\bigcup_{i+1+j=n}K_i\times L_j, \hspace{0.5cm}n\geq 0.$$ 
\end{defin}

\begin{ejem}[\cite{DBLP:books/mk/Groth15}]
	\begin{enumerate}
		\item If $K\in sSet$ and $L=\Delta^0$, then $K^\vartriangleright=K\star\Delta^0$ is the cocone or the right cone on $K$. Dually,  If $L\in sSet$ then $L^\vartriangleleft=\Delta^0\star L$ is the cone or the left cone on $L$.
		\item Let $K=\Lambda_0^2$. If we see this left horn as a pushout, the cocone $(\Lambda_0^2)^\vartriangleright$ is isomorphic to the square $\square=\Delta^1\times\Delta^1$, that is, to the filled in diagram
		\[\xymatrix{
			& {(0,0)}\ar[dd]_{}\ar[rr]^{}\ar[rrdd]& &
			(1,0) \ar[dd]^{}\\ \\
			& (0,1)\ar[rr]_{}& & (1,1)
			& 
		}\] 
	\end{enumerate}
\end{ejem}

\begin{prop}[\cite{DBLP:books/mk/Lurie}]
	
	\begin{enumerate}
		\item[(i)] For the standard simplexes one has an isomorphism $\Delta^i\star\Delta^j\cong \Delta^{i+1+j}$, $i,j\geq 0$, and these isomorphisms are with the obvious inclusions of $\Delta^i$ and $\Delta^j$.
		\item[(ii)] If $X$ and $Y$ are $\infty$-categories, then the join $X\star Y$ is an $\infty$-category.
	\end{enumerate}
\end{prop}

\subsection{The slice $\infty$-category}

In the case of the classical categories, if $A,B$ are categories and $p:A\rightarrow B$ is any functor, one can form the slice category $B_{/p}$ of the object over $p$ or cones on $p$. The following propositions allow us to define the slice $\infty$-category.   

\begin{prop}[\cite{Joyal2002}]
	Let $K$ and $S$ be simplicial sets, and $p:K\rightarrow S$ be an arbitrary map. There is a simplicial set $S_{/p}$ such that there exists a natural bijection 
	$$sSet(Y,S_{/p})\cong sSet_p(Y\ast K,S),$$
	where the subscript on the righthand side indicates that we consider only those morphisms $f:Y\ast K\rightarrow S$ such that $f|K=p$.
\end{prop}

\begin{prop}[Joyal]
	Let $X$ be an $\infty$-category and $K$ be a simplicial set. If $p:K\rightarrow X$ is a map of simplicial sets, then $X_{/p}$ is an $\infty$-category. Moreover, if $q:X\rightarrow Y$ is a categorical equivalence, then the induced map $X_{/p}\rightarrow Y_{qp}$ is a categorical equivalence as well.  
\end{prop}

\begin{defin}[Slice $\infty$-Categorical \cite{DBLP:books/mk/Lurie}]
	Let $X$ be an $\infty$-category, $K$ be a simplicial set and $p:K\rightarrow X$ be a map of simplicial sets. Define the slice $\infty$-category $X_{/p}$ of the objects over $p$ or cones on $p$. Dually, $X_{p/}$ is the $\infty$-category of objects under $p$ or cocones on $p$.
\end{defin}

\begin{ejem}
	Let $X$ be an $\infty$-category and $x\in X$ be an object, which corresponds to map  $x:\Delta^0\rightarrow X$. The objects of the $\infty$-category $X_{/x}$ of cones on $x$ are morphisms $y\rightarrow x$ in $X$, and the morphism from $y\rightarrow x$ to $z\rightarrow x$ in $X_{/x}$, are the 2-simplexes 
	\[\xymatrix{
		& {y}\ar[dd]_{}\ar[rrdd]& &
		\\ \\
		& z\ar[rr]_{}& & x
		& 
	}\]
	in the $\infty$-category $X$. 
\end{ejem}

\subsection{Limits and colimits}

An object $t$ of a category is final if for each object $x$ in this category, there is a unique morphism $x\rightarrow t$. Next, one defines the final objects at $\infty$-categories, under a contractible space of morphisms.

\begin{defin}[Final object \cite{DBLP:books/mk/Lurie}]
	An object $\omega\in X$ in an $\infty$-category $X$ is a final object if for any object $x\in X$, the Kan complex of morphisms $X(x,\omega)$ is contractible.
\end{defin}

\begin{teor}[Joyal]
	Take an object $\omega\in X$ in an $\infty$-category $X$ and $\pi:X_{/\omega}\rightarrow X$ the canonical projection. The following conditions are equivalent.
	\begin{enumerate}
		\item[(i)] The object $\omega\in X$ is final. 
		\item[(ii)] The map $\pi:X_{/\omega}\rightarrow X$ is a trivial fibration.
		\item[(iii)] The map $\pi:X_{/\omega}\rightarrow X$ is a categorical equivalence.
		\item[(iv)] The map  $\pi:X_{/\omega}\rightarrow X$ has a section which sends $\omega$ to $1_\omega$.
		\item[(v)] Any map $f_0:\partial\Delta^n\rightarrow X$, such that $n>0$ and $f(n)=\omega$, has an extension $f:\Delta^n\rightarrow X$.   
	\end{enumerate}
\end{teor}

\begin{corol}[\cite{DBLP:books/mk/Cisinski}]\label{corolally-final-objects}
	The final objects of an $\infty$-category $X$ form a Kan complex which is either empty or equivalent to the point. 
\end{corol}

\begin{corol}[\cite{DBLP:books/mk/Cisinski}]
	Let $x$ be a final object in an $\infty$-category $X$. For any simplicial set $A$, the constant map $A\rightarrow X$ with value $x$ is a final object in $Map(A,X)$.  
\end{corol}

\begin{defin}[Limit and colimit \cite{Joyal2002}]
	Let $X$ be an $\infty$-category and let $p:K\rightarrow X$ be a map of simplicial sets. A colimit for $p$ is an initial object of $X_{p/}$, and a limit for $p$ is a final object of $X_{/p}$. 
\end{defin}

By the dual of Corollary \ref{corolally-final-objects}, if the colimit exists, then the Kan complex of initial objects is contractible, i.e., the initial object is unique up to contractible choice. 

\subsection{$\infty$-categories of presheaves}   

\begin{defin}[$\infty$-categories of presheaves \cite{DBLP:books/mk/Lurie}]
	Let $S$ be a simplicial set. One lets $P(S)$ or $PS$ denote simplicial set $Fun(S^{op},\mathscr{S})$, where $\mathscr{S}$ denotes the $\infty$-category of the small Kan complexes or $\infty$-groupoids, also called the $\infty$-category of the spaces. One will say that $P(S)$ is the $\infty$-category of the presheaves on $S$.
\end{defin}

\begin{prop}[\cite{DBLP:books/mk/Lurie}]
	Let $S$ be a simplicial set. The $\infty$-category $P(S)$ of the presheaves on $S$ admits all small limits and colimits.
\end{prop}

\begin{prop}[$\infty$-Categorical Yoneda Lemma \cite{DBLP:books/mk/Lurie}]
	Let $S$ be a simplicial set. Then the Yoneda embedding $j:S\rightarrow PS$ is fully faithful. 
\end{prop}

\begin{nota}
	Let $X$ be an $\infty$-category and $S$ be a simplicial set. One lets $Fun^L(PS,X)$ denote the full subcategory of $Fun(PS,X)$ spanning those functors $PS\rightarrow X$ which preserve small colimits. 
	
	\medskip The motivation for this notation stems from Adjoint Functor Theorem (will be seen later), where $Fun^L(PS,X)$ also denotes the full subcategory of $Fun(PS,X)$ spanning those functors  which are left adjoints. 
\end{nota}

\begin{teor}[\cite{DBLP:books/mk/Lurie}]
	Let $S$ be a small simplicial set and let $X$ be an $\infty$-category which admits small colimits. The composition with the Yoneda embedding $j:S\rightarrow PS$ induces an equivalence of $\infty$-categories
	$$Fun^L(PS,X)\rightarrow Fun(S,X).$$
	
\end{teor}

\subsection{Some $\infty$-categories of presheaves}
In the literature, such as can be seen in \cite{DBLP:books/mk/Lurie} and \cite{DBLP:books/mk/Cisinski}, one finds the definition of the $\infty$-category of presheaves on a small $\infty$-category $B$ as $PB=[B^{op},\mathscr{S}]$ which sets the categorical equivalence 
$$Fun(A,PB)\simeq Fun(A\times B^{op},\mathscr{S}),$$
where $\mathscr{S}$ is the $\infty$-category of all small Kan complexes. The $\infty$-category $PB$ and this equivalence are defined on the $\infty$-category of all $\infty$-categories $CAT_\infty$. 

\medskip Another fundamental result is the following: given a collection of simplicial sets $\mathscr{K}$,  $\mathscr{R}\subseteq\mathscr{K}$  and  $A$ an $\infty$-category, there exists an $\infty$-category $P^\mathscr{K}_\mathscr{R}A$ and a functor $j:A\rightarrow P^\mathscr{K}_\mathscr{R}A$ with the following properties:
\begin{enumerate}
	\item $P^\mathscr{K}_\mathscr{R}A$ admits $\mathscr{K}$-indexed colimits, i.e., admits $K$-indexed colimits for each $K\in\mathscr{K}$.
	\item For every $\infty$-category $B$ which admits $\mathscr{K}$-indexed colimits, composition with $j$ induces an equivalence of $\infty$-categories 
	$$Fun_\mathscr{K}(P^\mathscr{K}_\mathscr{R}A,B)\simeq Fun_\mathscr{R}(A,B).$$
	If $A$ admits all the $\mathscr{R}$-indexed colimits, we also have
	\item The functor $j$ is fully faithful. 
\end{enumerate}
where $Fun_\mathscr{R}(A,B)$ is the full subcategory of $Fun(A,B)$ spanned by  those functors which preserve $\mathscr{R}$-indexed colimits, i.e., which preserve $K$-indexed colimits for each $K\in\mathscr{R}$; the same applies to  $Fun_\mathscr{K}(P^\mathscr{K}_\mathscr{R}A,B)$.

\begin{ejem}
	Let $\mathscr{R}=\emptyset$ and $\mathscr{K}$ be the class of all small simplicial sets. If $A$ is a small $\infty$-category, then $P^\mathscr{K}_\mathscr{R}A\simeq PA$.
\end{ejem}

\begin{ejem}
	Let $\mathscr{R}=\emptyset$ and $\mathscr{K}$ be the class of all small $\kappa$-filtered simplicial sets for some regular cardinal $\kappa$. If $A$ is a small $\infty$-category, then $P^\mathscr{K}_\mathscr{R}A\simeq Ind_\kappa A$.
\end{ejem}

\begin{ejem}\label{k-small-limits-example}
	Let $\mathscr{R}=\emptyset$ and $\mathscr{K}$ be the class of all $\kappa$-small simplicial sets for some regular cardinal $\kappa$. If $A$ is a small $\infty$-category, then $P^\mathscr{K}_\mathscr{R}A\simeq P^\kappa A$, where $P^\kappa A$ is the full subcategory of all $\kappa$-compact  elements of $PA$. 
\end{ejem}

\begin{ejem}
	Let $\mathscr{R}$ be the class of all $\kappa$-small simplicial sets for some regular cardinal $\kappa$ and let $\mathscr{K}$ be the collection of all small simplicial sets. Let $A$ be a small $\infty$-category which admits $\kappa$-small colimits, then  $P^\mathscr{K}_\mathscr{R}A\simeq Ind_\kappa A$. Also, we have $A\simeq P^\kappa C$ for some small $\infty$-category $C$ which does not necessarily admit $\kappa$-small colimits.
\end{ejem}

\section{Arbitrary syntactical homotopic $\lambda$-models}

In this section, we discuss some consequences of the arbitrary syntactic homotopic lambda models introduced in \cite{Martinez}, which correspond to a direct generalization (2-dimensional) of the traditional structured set models of a cartesian closed category (1-dimensional) as can be seen in \cite{DBLP:books/mk/Barendregt} and \cite{DBLP:books/mk/HindleyS08}.

\begin{nota}

For $K$ being a Kan simplex and $n\geq 0$, let $K_n=Fun(\Delta^n,K)$ be the Kan complex of the $n$-simplexes.

Denote by $\Omega_n(K,a)$ the class of all the spheres $\partial\Delta^n\rightarrow K$ with initial vertex  $a\in K$. 

Let $Var$ be the set of all variables of $\lambda$-calculus, for all $n\geq 0$, each assignment  $\rho:Var\rightarrow K_n$ ($\rho(t)$ is an $n$-simplex of $K$, for each $t\in Var$), $x\in Var$ and $f\in K_m$. Denote by $[f/x]\rho:Var\rightarrow K$ the assignment
\begin{equation*}
([f/x]\rho)(t)=
\begin{cases}
f & \text{if \, $t=x$}\\
\rho (t) & \text{if \, $t\neq x$.}
\end{cases}
\end{equation*}
\end{nota}

\begin{defin}[Syntactic Homotopic $\lambda$-model]\label{homotopic-lambda-model-definition}
	A  homotopic $\lambda$-model is a triple $\langle K,\bullet, \llbracket \, \rrbracket\rangle$, where $K$ is a Kan complex, $\bullet:K\times K\rightarrow K$ is a functor,  and $\llbracket \, \rrbracket$ is a mapping which assigns to $\lambda$-term $M$ and each assignment $\rho:Var\rightarrow K_n$, an $n$-simplex $\llbracket P \rrbracket_{\rho}$ in $K$ for each $n\geq 0 $ such that
	\begin{enumerate}
		\item $\llbracket x \rrbracket_{\rho}=\rho(x);$
		\item $\llbracket MN \rrbracket_{\rho}=\llbracket M\rrbracket_{\rho}\bullet\llbracket N\rrbracket_{\rho};$
		\item For each $f\in K_{n}$, there is a limit $\beta_{f}:\llbracket\lambda x.M\rrbracket_{\rho}\bullet f\rightarrow\llbracket M\rrbracket_{[f/x]\rho}$ from $\llbracket M\rrbracket_{[f/x]\rho}\in K_n$;
		\item $\llbracket M\rrbracket_{\rho}=\llbracket M\rrbracket_{\sigma}$ if $\rho(x)=\sigma(x)$ for $x\in FV(M)$;
		\item $\llbracket\lambda x.M(x)\rrbracket_{\rho}=\llbracket\lambda y.M(y)\rrbracket_{\rho}$ if $y\notin FV(M)$;
		\item if $(\forall a\in K_0)(\forall n\geq 1)(\forall \omega\in \Omega_n(K,a))\left( \llbracket M\rrbracket_{[\omega/x]\rho}=\llbracket N\rrbracket_{[\omega/x]\rho}\right) $, then 
		
		$\llbracket\lambda x.M\rrbracket_{\rho}=\llbracket\lambda x.N\rrbracket_{\rho}$.
	\end{enumerate}
	The homotopic model $\langle K,\bullet, \llbracket \, \rrbracket\rangle$ is an extensional syntactic homotopic model if it satisfies the additional property: there is a colimit $\eta:\llbracket M\rrbracket_{\rho}\rightarrow\llbracket\lambda x.Mx\rrbracket_{\rho}$ from  $\llbracket M\rrbracket_{\rho}\in K_n$ with $x\notin FV(M)$.
\end{defin}

\begin{rem}
	Note that the condition (3) of the Definition \ref{homotopic-lambda-model-definition}, by the Homotopy Extension Lifting Property \cite{DBLP:books/mk/Lurie}, if $x\in FV(P)$ any cone $\beta_{f}:\llbracket\lambda x.P\rrbracket_{\rho}\bullet f\rightarrow\llbracket P\rrbracket_{[f/x]\rho}$ in $K_{/\llbracket P\rrbracket_{[f/x]\rho}}$ is a limit of $n$-simplex $\llbracket P\rrbracket_{[f/x]\rho}$. Since $K_n$ is a Kan complex, by the theorem mentioned above, the induced functor  $Fun(\Delta^1,K_n)\rightarrow Fun(\Delta^0,K_n)$ is a trivial fibration, hence the fibre $(K_n)_{/\llbracket P\rrbracket_{[f/x]\rho}}$ is contractible, that is $K_{/\llbracket P\rrbracket_{[f/x]\rho}}$ is contractible. Thus the condition (3) is reduced to the existence of a cone $\beta_{f}:\llbracket\lambda x.P\rrbracket_{\rho}\bullet f\rightarrow\llbracket P\rrbracket_{[f/x]\rho}$ in $(K_n)_{/\llbracket P\rrbracket_{[f/x]\rho}}$. 
\end{rem}

\begin{defin}
	Let $\mathfrak{M}=\langle K,\bullet,\llbracket \, \rrbracket\rangle$ be  a syntactic homotopic $\lambda$-model. The notion of satisfaction in $\mathfrak{M}$ is defined as
	$$\mathfrak{M},\rho\models M=N\hspace{0.2cm}\Longleftrightarrow\hspace{0.2cm}\llbracket M \rrbracket_\rho\simeq\llbracket N \rrbracket_\rho$$
	$$\hspace{1.4cm}\mathfrak{M}\models M=N\hspace{0.2cm}\Longleftrightarrow\hspace{0.2cm}\forall\rho\,(\mathfrak{M},\rho\models M=N)$$
\end{defin}

\begin{lem}
	Let $\mathfrak{M}=\langle K,\bullet,\llbracket \, \rrbracket\rangle$ be a syntactic homotopic $\lambda$-model. Then, for all $M$, $N$, $x$, $n\geq 0$ and $\rho:Var\rightarrow K_n$, 
	\begin{enumerate}
		\item[(i)]  $\llbracket[z/x] M\rrbracket_\rho=\llbracket M\rrbracket_{[\rho(z)/x]\rho}$,
		\item[(ii)] if $\llbracket[N/x] M\rrbracket_\rho=\llbracket M\rrbracket_{[\llbracket N\rrbracket_\rho/x]\rho}$, then 	$\llbracket\lambda y.[N/x] M\rrbracket_\rho=\llbracket\lambda y. M\rrbracket_{[\llbracket N\rrbracket_\rho/x]\rho},$ 
		\item[(iii)] 	$\llbracket[N/x] M\rrbracket_\rho=\llbracket M\rrbracket_{[\llbracket N\rrbracket_\rho/x]\rho}.$  
	\end{enumerate}

\end{lem}
\begin{proof}
	(i) One has that,
		\begin{align*}
	\llbracket [z/x]M\rrbracket_\rho=\llbracket  [z/x]M\rrbracket_{[\rho(z)/z]\rho} 
	&\xleftarrow{\beta_{\rho(z)}} \llbracket\lambda z. [z/x]M\rrbracket_{\rho}\bullet\rho(z)=\llbracket\lambda x. M\rrbracket_{\rho}\bullet\rho(z); \\
	\llbracket  M\rrbracket_{[\rho(z)/x]\rho}&\xleftarrow{\beta_{\rho(z)}}\llbracket\lambda x. M\rrbracket_{\rho}\bullet\rho(z) 
	\end{align*}
	That is, $\llbracket[z/x] M\rrbracket_\rho=\llbracket M\rrbracket_{[\rho(z)/x]\rho}$. 
	
	\medskip (ii) First suppose $x\notin FV(N)$. Let $y\neq x$ and $y\notin FV(N)$. For $\rho'=[\llbracket N\rrbracket_\rho/x]\rho$ and any $\omega\in\Omega_n(K,a)$, with an arbitrary vertex $a\in K$ and any $n\geq 1$, one has
    \begin{align*}
		\llbracket[N/x] M\rrbracket_{[\omega/y]\rho'}&=\llbracket [N/x]M\rrbracket_{[\omega/y]\rho} \\
		&=\llbracket M\rrbracket_{[\omega/y][\llbracket N\rrbracket_\rho/x]\rho};\hspace{0.5cm} \text{by hypothesis} \\
		&=\llbracket M\rrbracket_{\omega/y]\rho'}.
	\end{align*}
	By Definition \ref{homotopic-lambda-model-definition} (6), 
	 $\llbracket\lambda y.[N/x] M\rrbracket_{\rho'}=\llbracket\lambda y.M\rrbracket_{\rho'}$, hence
	 $$\llbracket\lambda y.[N/x] M\rrbracket_{\rho}=\llbracket\lambda y.[N/x] M\rrbracket_{\rho'}=\llbracket\lambda y. M\rrbracket_{[\llbracket N\rrbracket_\rho/x]\rho}.$$
	 
	 If $x\in FV(N)$, the proof is identical to  \cite[p.103]{DBLP:books/mk/Barendregt}.
	  
	 \medskip (iii) Follows easy by induction on the $\lambda$-term $M$.
\end{proof}
\begin{teor}
	Let $\mathfrak{M}=\langle K,\bullet,\llbracket \, \rrbracket\rangle$ be a syntactic homotopic $\lambda$-model. Then
	$$\lambda\beta\vdash M=N\hspace{0.2cm}\Longrightarrow\hspace{0.2cm} \mathfrak{M}\models M=N.$$
\end{teor}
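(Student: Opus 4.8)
The plan is to argue by induction on the derivation of $\lambda\vdash M=N$ in the equational theory of the $\lambda$-calculus, whose generating rules are reflexivity, symmetry and transitivity of $=$, the congruence rules for application and the congruence rule ($\xi$) for abstraction, together with the axiom schemes ($\beta$) and ($\alpha$). For each such rule I would check that the semantic relation $\llbracket M\rrbracket_\rho\simeq\llbracket N\rrbracket_\rho$ is closed under it, for every environment $\rho$; since $\mathfrak{M}\models M=N$ means precisely that this holds for all $\rho$, soundness follows.

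Before running the induction I would record two structural facts about the model. First, because $K$ is a Kan complex, the relation $\simeq$ (``there is an edge/path connecting two vertices'') is an equivalence relation on the points of $K$: reflexivity is given by degenerate edges, while the horn fillers provided by the Kan condition yield path inversion and path composition. This at once handles the reflexivity, symmetry and transitivity rules. Second, I would note that the application operation $\bullet$ and the interpretation $\llbracket\cdot\rrbracket$ are invariant under $\simeq$, i.e. homotopic inputs produce homotopic outputs; this is what makes the application-congruence rule sound.

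The two genuinely computational cases are ($\beta$) and ($\alpha$). For ($\beta$) I would compute, for an arbitrary $\rho$,
$$\llbracket(\lambda x.M)N\rrbracket_\rho\simeq\llbracket\lambda x.M\rrbracket_\rho\bullet\llbracket N\rrbracket_\rho\simeq\llbracket M\rrbracket_{[\llbracket N\rrbracket_\rho/x]\rho}\simeq\llbracket[N/x]M\rrbracket_\rho,$$
using clause (2) of Definition~\ref{homotopic-lambda-model-definition}, then clause (3), and finally the Substitution Lemma, chaining these equivalences by transitivity of $\simeq$. The ($\alpha$) rule is exactly clause (5) and the ($\xi$) rule is exactly clause (6), so both are immediate: for ($\xi$) one applies clause (6) after observing that the inductive hypothesis gives $\llbracket M\rrbracket_{[a/x]\rho}\simeq\llbracket N\rrbracket_{[a/x]\rho}$ for every $a$. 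For the application-congruence rule one combines clause (2) with the homotopy-invariance of $\bullet$ recorded above.

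The main obstacle I expect is not any single computation but justifying the homotopy-invariance statements on which the congruence rules rest. Concretely, the definition of the model only posits $\bullet$ as a binary operation on points, so one must argue that it is in fact compatible with the path structure of $K$ (for instance by exhibiting it as induced from a simplicial map $K\times K\to K$), and likewise that $\llbracket\cdot\rrbracket$ carries homotopic environments to homotopic values. Getting these compatibilities precisely right, together with the careful use of the Kan condition to guarantee that $\simeq$ really is an equivalence relation, is where the care lies; once they are in place, the inductive step for each rule is routine.
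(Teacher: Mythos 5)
Your proposal is correct and follows essentially the same route as the paper: induction on the derivation, with the $\beta$-axiom handled by exactly the chain clause (2) $\to$ clause (3) $\to$ Substitution Lemma, the $\xi$-rule by clause (6), and $\alpha$ by clause (5). The only difference is that the paper dismisses the remaining rules as ``trivial,'' whereas you rightly flag that the congruence rule for application needs $\bullet$ to respect $\simeq$ (and that symmetry/transitivity need $\simeq$ to be an equivalence relation via the Kan condition) --- a genuine point of care that the paper's own proof leaves implicit.
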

\begin{proof}
	By induction on the length of proof. For the axiom $(\lambda x.M)N=[N/x]M$ we proceed
	\begin{align*}
	\llbracket  (\lambda x.M)N\rrbracket_\rho&=\llbracket  \lambda x.M\rrbracket_\rho\bullet \llbracket N\rrbracket_\rho \\
	&\xrightarrow{\beta_{\llbracket N\rrbracket_\rho }} \llbracket M\rrbracket_{[\llbracket N\rrbracket_\rho/x]\rho} \\
	&= \llbracket[N/x] M\rrbracket_\rho
	\end{align*}
	The rule $M=N\Longrightarrow \lambda x.M=\lambda x.N$ follows from Definition \ref{homotopic-lambda-model-definition} (6). The other rules are trivial.
\end{proof}

\begin{defin}[h.p.o]\label{h.p.o}
	Let $\hat{K}$ be an $\infty$-category. The largest Kan complex $K\subseteq\hat{K}$ is a homotopy partial order (h.p.o), if for every $x,y\in K$ one has that $\hat{K}(x,y)$ is contractible or empty. Hence, the Kan complex $K$ admits a relation of h.p.o $\precsim$ defined for each $x,y\in K$ as follows:
	$x\precsim y$ if $\hat{K}(x,y)\neq\emptyset$, hence the pair $(K,\precsim)$ is a h.p.o. (we denote simply by $K$). The $\infty$-category $\hat{K}$  is also called a h.p.o.
\end{defin}

\begin{defin}[c.h.p.o] 
	Let $K$ be an h.p.o.
	\begin{enumerate}
		\item An h.p.o $X\subseteq K$ is directed if $X\neq \emptyset$ and for each $x,y\in X$, there exists $z\in X$ such that $x\precsim z$ and $y\precsim z$.
		\item $K$ is a complete homotopy partial order (c.h.p.o) if
		\begin{enumerate}
			\item There are initial objects, i.e.,  $\bot\in K$ is a initial object if for each $x\in K$, $\bot\precsim x$. 
			\item For each directed $X\subseteq\mathcal{K}$ the supremum (or colimit) $\bigcurlyvee X\in\mathcal{K}$ exists. 
		\end{enumerate}
	\end{enumerate}
\end{defin}

\begin{defin}[Reflexive and Extensional Kan complex]\label{Definition Reflexive and Extensional Kan complex}
	Let $K$ be a c.h.p.o. The Kan complex $K$ is called reflexive if the full subcategory  $[K\rightarrow K]\subseteq Fun(K,K)$ of the continuous functors is a retract of $K$, i.e., there are continuous functors
	$$F:K\rightarrow[K\rightarrow K],\hspace{1cm} G:[K\rightarrow K]\rightarrow K$$  
	such that there is a natural equivalence $\varepsilon:FG\rightarrow id_{[K\rightarrow K]}$.
	
	\medskip If there is a natural equivalence $\eta:id_K\rightarrow GF$, we call  $K$ an extensional Kan complex. 
\end{defin}

In \cite{Martinez21}, we proved the existence of extensional Kan complexes.

\begin{defin}\label{interpretation-definition}
	Let $K$ be a reflexive Kan complex (via $F$, $G$ and $\varepsilon$).
	\begin{enumerate}
		\item For $f,g:\Delta^n\rightarrow K$ (or also $f,g\in K_n$) define the $n$-simplex
		$$f\bullet_{\Delta^n} g=F(f)(g).$$
		In particular for vertices $a,b\in K$,
		$$a\bullet b=a\bullet_{\Delta^0} b=F(a)(b),$$
		
		besides, $F(a)\bullet(-)=a\bullet(-)$ and $F(-)(b)=(-)\bullet b$ are functors on $K$, then for $f\in K_n$ one defines the $n$-simplexes 
		$$a\bullet f=F(a)(f),\hspace{1cm}f\bullet b=F(f)(b).$$ 
		
		\item For each $n\geq 0$, let $\rho$ be a valuation at $K_n$. Define the interpretation $\llbracket \,\,\, \rrbracket_\rho:\Lambda\rightarrow K_n$ by induction as follows
		\begin{enumerate}
			\item $\llbracket x \rrbracket_\rho=\rho(x),$
			\item $\llbracket MN \rrbracket_\rho=\llbracket M \rrbracket_\rho\bullet\llbracket N \rrbracket_\rho$,
			\item $\llbracket \lambda x.M \rrbracket_\rho= G(\boldsymbol{\lambda} f.\llbracket M \rrbracket_{[f/x]\rho})$, where $\boldsymbol{\lambda} f.\llbracket M \rrbracket_{[f/x]\rho}=\llbracket M \rrbracket_{[-/x]\rho}$. 
		\end{enumerate}
	\end{enumerate}
\end{defin}

\begin{lem}
	If $n\geq 0$ and $\rho:Var\rightarrow K_n$, then  $\boldsymbol{\lambda} f.\llbracket M \rrbracket_{[f/x]\rho}$ defines a functor $\Delta^n\rightarrow [K\rightarrow K]$; hence $\llbracket \lambda x.M \rrbracket_\rho$ is well-defined in Definition \ref{interpretation-definition} (2.c).
	
	\medskip\textit{Proof}.
		By induction on $P$ we show that $\boldsymbol{\lambda} f.\llbracket P \rrbracket_{[f/(x)(i)]\rho}$ defines a functor  $K\times \{i\}\rightarrow K$ for each vertex $i\in\Delta^n$ and all $\rho$ in $K_n$, where the map $\llbracket P \rrbracket_{[-/(x)(-)]\rho}:K\times\Delta^n\rightarrow K$ (with $\rho(x)(-):\Delta^n\rightarrow K$) depicts to $\boldsymbol{\lambda} f.\llbracket P \rrbracket_{[f/x]\rho}:K\rightarrow K_n$. 
		
		For each $f:\Delta^m\rightarrow K$ one has: 
		\begin{enumerate}
			\item[(a)] $\llbracket x \rrbracket_{[f/(x)(i)]\rho}=f\in K_m$. So $\boldsymbol{\lambda} f.\llbracket x \rrbracket_{[f/(x)(i)]\rho}=I_{K}$ (Identity functor) , which is continuous.
			\item[(b)] $\llbracket x \rrbracket_{[f/(y)(i)]\rho}=s^m(\rho(x)(i))\in K_m$, with $s^m$ the degeneration operator applied m-times to vertex $\rho(x)(i)$.  Then $\boldsymbol{\lambda} f.\llbracket x \rrbracket_{[f/(y)(i)]\rho}$ is the constant functor in the vertex $\rho(x)(i)$, which is continuous.
			\item[(c)] $\llbracket MN \rrbracket_{[f/(x)(i)]\rho}= \llbracket M \rrbracket_{[f/(x)(i)]\rho}\bullet_{\Delta^m}\llbracket N \rrbracket_{[f/(x)(i)]\rho}\in K_m$;  since by I.H (Induction Hypothesis)  $\llbracket M \rrbracket_{[f/(x(i))]\rho}$, $\llbracket N \rrbracket_{[f/(x)(i)]\rho}$ are $m$-simplexes (can be degenerates), hence $\llbracket MN \rrbracket_{[f/(x)(i)]\rho}$ is an m-simplex. Besides, the functor $\llbracket MN \rrbracket_{[-/(x)(i)]\rho}=F(\llbracket M \rrbracket_{[-/(x)(i)]\rho})(\llbracket N \rrbracket_{[-/(x)(i)]\rho})$ is continuous by I.H and continuity of $F$. 
			
			\item[(d)] $\llbracket \lambda y.M  \rrbracket_{[f/(x)(i)]\rho}=G(\boldsymbol{\lambda}g.\llbracket M \rrbracket_{[g/y][f/(x)(i)]\rho})\in K_m$;  by I.H the map
			
			 $\boldsymbol{\lambda}f.\boldsymbol{\lambda}g.\llbracket M \rrbracket_{[g/y][f/x(i)]\rho}:K\rightarrow [K\rightarrow K]$ is a continuous functor in $f$ and $g$ separately, so is continuous \cite{Martinez21}. Thus, $\boldsymbol{\lambda}g.\llbracket M \rrbracket_{[g/y][f/(x)(i)]\rho}$ is an $m$-simplex at $[K\rightarrow K]$, applying the continuous functor $G:[K\rightarrow K]\rightarrow K$ on it, one has an $m$-simplex in $K$, and hence the functor $\llbracket \lambda y.M  \rrbracket_{[-/(x)(i)]\rho}= G\circ \llbracket M \rrbracket_{[-/y][-/x(i)]\rho}$ is continuous.
		\end{enumerate}
\end{lem}

For the proof of Theorem \ref{theorem-lambda-model-syntatic}, we make the following remark. 

\begin{rem}
Just as the category $Set$ has enough points, the $\infty$-category $\mathscr{S}$ has enough points in the following sense: Let $f,g:X\rightarrow Y$ be functors between Kan complexes. If for each $x\in X$, $n\geq 0$ one has $f_{x}^n=g_{x}^n$, with $f_x^n:\pi_n(X,x)\rightarrow\pi_n(Y,f(x))$ and $g_x^n:\pi_n(X,x)\rightarrow\pi_n(Y,g(x))$ as maps induced by $f$ and $g$ respectively, then one has functorial equivalence $f\simeq g$. The property `$\,\mathscr{S}$ has enough points' can also be interpreted as: given a morphism $f:X\rightarrow Y$ in $\mathscr{S}$, if the induced map $f_x^n:\pi_n(X,x)\rightarrow\pi_n(Y,f(x))$ is an isomorphism of groups for each $n\geq 0$ and $x\in X$, then $f$ is a homotopy equivalence. 
\end{rem}

\begin{teor}\label{theorem-lambda-model-syntatic}
	Let $K$ be a reflexive Kan complex via the morphisms $F$, $G$, and  let $\mathfrak{M}=\langle K,\bullet,\llbracket \,\,\, \rrbracket\rangle$. Then
	\begin{enumerate}
		\item $\mathfrak{M}$ is a syntactic homotopic $\lambda$-model.
		\item $\mathfrak{M}$ is extensional iff there is a natural equivalence $\eta:id_K\rightarrow GF.$
	\end{enumerate}
\end{teor}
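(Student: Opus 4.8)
The plan is to mirror the classical proof that a reflexive object of a cartesian closed category yields a $\lambda$-model (compare \cite{DBLP:books/mk/Barendregt,DBLP:books/mk/HindleyS08}), with every strict equality weakened to a homotopy $\simeq$ and with each appeal to extensionality of functions replaced by the enough-points hypothesis. For part (1) I would verify the six clauses of Definition \ref{homotopic-lambda-model-definition} in turn. Clauses (1) and (2) hold strictly, directly from the defining equations of $\llbracket\,\,\,\rrbracket$ in Definition \ref{interpretation-definition}, and clause (4) is a routine induction on $M$ using that $\llbracket x\rrbracket_\rho=\rho(x)$ sees $\rho$ only through the free variables of $M$. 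The substance lies in clauses (3), (5) and (6).

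For clause (3), given a point $a:\triangle^0\rightarrow K$, I would compute
\begin{align*}
\llbracket\lambda x.M\rrbracket_\rho\bullet a
&= F\bigl(G(\boldsymbol{\lambda} d.\llbracket M\rrbracket_{[d/x]\rho})\bigr)(a)\\
&\simeq (\boldsymbol{\lambda} d.\llbracket M\rrbracket_{[d/x]\rho})(a)\\
&\simeq \llbracket M\rrbracket_{[a/x]\rho},
\end{align*}
the first homotopy being the weak retraction $FG\simeq id_{[K\Rightarrow K]}$ evaluated at $a$, and the second the $\beta$-rule for the meta-abstraction $\boldsymbol{\lambda}$. Clauses (5) and (6) share a common shape: it suffices to homotope the two relevant meta-abstractions and then transport the homotopy through $G$. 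For (6) the hypothesis supplies $\llbracket M\rrbracket_{[a/x]\rho}\simeq\llbracket N\rrbracket_{[a/x]\rho}$ at every point $a$, so $\boldsymbol{\lambda} d.\llbracket M\rrbracket_{[d/x]\rho}$ and $\boldsymbol{\lambda} d.\llbracket N\rrbracket_{[d/x]\rho}$ agree on all points of $K$; by enough points they are homotopic as morphisms $K\Rightarrow K$, hence as points of $[K\Rightarrow K]$, and applying $G$ yields $\llbracket\lambda x.M\rrbracket_\rho\simeq\llbracket\lambda x.N\rrbracket_\rho$. For (5) the pointwise identification $\llbracket M\rrbracket_{[a/x]\rho}\simeq\llbracket[y/x]M\rrbracket_{[a/y]\rho}$ is exactly the preceding substitution lemma combined with $y\notin FV(M)$ and clause (4), after which the same enough-points-and-$G$ step applies.

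For part (2) I would first reduce $\llbracket\lambda x.Mx\rrbracket_\rho$ to $GF$ evaluated at $\llbracket M\rrbracket_\rho$. Using $x\notin FV(M)$ (clause (4)), the definition of $\bullet$, and the $\eta$-rule for $\boldsymbol{\lambda}$,
\begin{align*}
\llbracket\lambda x.Mx\rrbracket_\rho
&= G\bigl(\boldsymbol{\lambda} d.(\llbracket M\rrbracket_\rho\bullet d)\bigr)\\
&= G\bigl(\boldsymbol{\lambda} d.\,F(\llbracket M\rrbracket_\rho)(d)\bigr)\\
&\simeq G\bigl(F(\llbracket M\rrbracket_\rho)\bigr)=GF(\llbracket M\rrbracket_\rho).
\end{align*}
Hence extensionality, namely $\llbracket\lambda x.Mx\rrbracket_\rho\simeq\llbracket M\rrbracket_\rho$ for all $M$ and $\rho$, is equivalent to $GF(\llbracket M\rrbracket_\rho)\simeq\llbracket M\rrbracket_\rho$ for all such data. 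If $GF\simeq id_K$ this holds at once. Conversely, every point $a:\triangle^0\rightarrow K$ equals $\llbracket x\rrbracket_\rho$ for the valuation with $\rho(x)=a$, so extensionality forces $GF(a)\simeq a$ on all points of $K$; enough points, applied to the pair $GF,id_K:K\Rightarrow K$, then gives $GF\simeq id_K$.

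The main obstacle I anticipate is the coherent handling of homotopies in the two displayed chains. In contrast to the $1$-categorical setting, the steps ``evaluate the homotopy $FG\simeq id$ at the point $a$'' and ``apply $G$ to a homotopy of morphisms'' are operations on $2$-cells, so one must confirm that evaluation at a fixed point and post-composition with $G$ act functorially on homotopies in order for the successive $\simeq$ to compose. Likewise, the transitions from a pointwise homotopy to a genuine homotopy of morphisms in clauses (5), (6) and in the converse direction of (2) rest squarely on the enough-points hypothesis; this is the single place where the $1$-categorical proof fails to transfer verbatim and where the care characteristic of the $\infty$-categorical setting is required.
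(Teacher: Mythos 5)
Your part (1) follows the paper's proof essentially step for step: clauses (1), (2) are immediate, (4) is the same routine induction, (3) is the identical $FG\simeq id_{[K\Rightarrow K]}$ computation followed by the meta-$\beta$-rule, and (6) is the same pointwise-homotopy-then-enough-points-then-$G$ argument. (For (5) the paper obtains the pointwise homotopy by running the already-proved clause (3) on $(\lambda x.M)y$ rather than by quoting the substitution lemma directly, but the content is the same.) Where you genuinely diverge is part (2). The paper argues applicatively: it shows $(GFb)\bullet a=F(GFb)a\simeq Fba=b\bullet a$ for every point $a$ and then concludes $GFb\simeq b$ ``by extensionality'' --- i.e.\ it tacitly invokes the Meyer-style principle that in an extensional model equal applicative behaviour on all points forces equivalence, a principle it never derives from its official definition of extensionality ($\llbracket\lambda x.Mx\rrbracket_\rho\simeq\llbracket M\rrbracket_\rho$); its converse direction likewise establishes a Meyer-style statement rather than the definitional one. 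You instead compute $\llbracket\lambda x.Mx\rrbracket_\rho\simeq GF(\llbracket M\rrbracket_\rho)$ once and for all and note that every point of $K$ is $\llbracket x\rrbracket_\rho$ for a suitable valuation, so both directions collapse to a single application of enough points to the pair $GF,\,id_K:K\Rightarrow K$. This is cleaner and actually repairs the paper's unproved appeal to Meyer extensionality; the price is the meta-level $\eta$-step $\boldsymbol{\lambda}d.F(\llbracket M\rrbracket_\rho)(d)\simeq F(\llbracket M\rrbracket_\rho)$, which you use without comment but which is itself justified by the same enough-points argument (both sides agree on all points of $K$ by the meta-$\beta$-rule), so no real gap remains. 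Your closing caveats about coherence of $2$-cells under evaluation and post-composition with $G$ apply equally to the paper's own proof.
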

\begin{proof}
	1. The conditions in Definition \ref{homotopic-lambda-model-definition} (1), (2) are trivial. As to (3), given  $g\in K_n$,
	\begin{align*}
	\llbracket \lambda x.M \rrbracket_\rho\bullet g&=G(\boldsymbol{\lambda} f.\llbracket M \rrbracket_{[f/x]\rho})\bullet g \\
	&=F(G(\boldsymbol{\lambda} f.\llbracket M \rrbracket_{[f/x]\rho}))(g) \\
	&\xrightarrow{(\varepsilon_{\boldsymbol{\lambda} f.\llbracket M \rrbracket_{[f/x]\rho}})_g} (\boldsymbol{\lambda} f.\llbracket M \rrbracket_{[f/x]\rho}) (g) \\
	&=\llbracket M \rrbracket_{[g/x]\rho},
	\end{align*}
	where  $\varepsilon_{\boldsymbol{\lambda} f.\llbracket M \rrbracket_{[f/x]\rho}}$ is the natural equivalence, induced by $\varepsilon$, between the functors $F(G(\boldsymbol{\lambda} f.\llbracket M \rrbracket_{[f/x]\rho}),\boldsymbol{\lambda} f.\llbracket M \rrbracket_{[f/x]\rho}:K\rightarrow K_n$. Hence $(\varepsilon_{\boldsymbol{\lambda} f.\llbracket M \rrbracket_{[f/x]\rho}})_g$ is the equivalence induced by the $n$-simplex $g$ in $K$.
	
	The condition (4) is trivial, since if $\llbracket M \rrbracket_\rho\neq\llbracket M \rrbracket_\sigma
	 $ so there is $x\in FV(M)$ such that $\rho(x)\neq\sigma(x)$. The condition (5), given any vertex $a\in K$ and $y\notin FV(M)$
	\begin{align*}
	\boldsymbol{\lambda}f.\llbracket M(y) \rrbracket_{[f/y]\rho}&=\boldsymbol{\lambda}f.\llbracket M(x) \rrbracket_{[f/x]\rho}. 
	\end{align*}
	Applying $G$ and by Definition \ref{interpretation-definition} (c), it follows that
	\begin{align*}
	\llbracket \lambda y.M(y) \rrbracket_{\rho}&=G(\boldsymbol{\lambda}f.\llbracket M(y) \rrbracket_{[f/y]\rho}) \\
	&= G(\boldsymbol{\lambda}f.\llbracket M(x) \rrbracket_{[f/x]\rho}) \\
	&=\llbracket \lambda x.M(x) \rrbracket_{\rho} 
	\end{align*}
	Condition (6). By hypothesis, for every vertex $a\in K$, $n\geq 1$ and  $\omega\in \Omega_n(K,a)$
	\begin{align*}
	(\boldsymbol{\lambda}f.\llbracket M \rrbracket_{[f/x]\rho})(\omega)&=\llbracket M \rrbracket_{[\omega/x]\rho} \\
	&=\llbracket N \rrbracket_{[\omega/x]\rho} \\
	&=(\boldsymbol{\lambda}f.\llbracket N \rrbracket_{[f/x]\rho})(\omega),
	\end{align*}
	since $K$ does have enough points, then 
	$$\boldsymbol{\lambda}f.\llbracket M \rrbracket_{[f/x]\rho}=\boldsymbol{\lambda}f.\llbracket N \rrbracket_{[f/x]\rho}$$
	applying $G$ and by Definition \ref{interpretation-definition} (c),
	\begin{align*}
	\llbracket \lambda x.M \rrbracket_{\rho}&=G(\boldsymbol{\lambda}f.\llbracket M \rrbracket_{[f/x]\rho}) \\
	&= G(\boldsymbol{\lambda}f.\llbracket N \rrbracket_{[f/x]\rho}) \\
	&=\llbracket \lambda x.N \rrbracket_{\rho}
	\end{align*}
	2. Suppose that $\mathfrak{M}$ is extensional. Let $\omega\in\Omega_n(K,a)$. Then for all $a\in K$
	$$(GF(\omega))\bullet a=F(GF(\omega))(a)=((FG)F(\omega))(a)
	\xrightarrow{(\varepsilon_\omega F)_a} F(\omega)(a) 
	= \omega\bullet a,$$	
	by extensionality 
	$$GF(\omega)\xrightarrow{(\varepsilon_\omega F)} \omega= id_K(\omega),$$ 
	since $K$ does have enough points, hence $$GF\xrightarrow{\varepsilon F} Id_K.$$
	If $GF\xrightarrow{\eta} Id_K$. For all $\omega\in \Omega_n (K,a)$ by hypothesis and Definition \ref{interpretation-definition}
	$$F(a)(\omega)=a\bullet \omega\xrightarrow{\sigma_\omega} a'\bullet \omega= F(a')(\omega),$$
	since $K$ does have enough points, $Fa\xrightarrow{\sigma} Fa'$. Applying $G$, it follows that 
		$$a\xrightarrow{\eta_{a}}GF(a)\xrightarrow{G\sigma} GF(a')\xrightarrow{\tilde{\eta}_{a'}}a',$$
		where $\tilde{\eta}_{a'}$ is an inverse of $\eta_{a'}$. 
\end{proof}

\section{Homotopic $\lambda$-models}

Next, let us define cartesian closed $\infty$-category, points and paths of an $\infty$-category and  $\infty$-homotopic $\lambda$-model.    

\begin{defin}[Cartesian closed $\infty$-category]
	Let $\mathscr{C}$ be an $\infty$-category whose objects are small $\infty$-categories. We say that $\mathscr{C}$ is a cartesian closed $\infty$-category  (c.c.i.) if:
	\begin{enumerate}
		\item $\mathscr{C}$ has a terminal object $T$, i.e., $\mathscr{C}(X,T)$ is contractible for each $X\in\mathscr{C}$.
		\item For $X,Y\in\mathscr{C}$, there is the cartesian product $X\times Y$, and this belongs to $\mathscr{C}$,
		\item For $X,Y,Z\in\mathscr{C}$, there exists an internal morphism spaces $Y\Rightarrow Z$ in $\mathscr{C}$ such that sets the natural equivalence
		$$\mathscr{C}(X\times Y,Z)\simeq \mathscr{C}(X,Y\Rightarrow Z).$$  
	\end{enumerate} 	
\end{defin}

\begin{defin}[Enough points and $n$-paths]\label{Enough points and n-paths}
	Take an $\infty$-category $\mathscr{C}$ with a terminal object $T$. A point of an object $X$ is a morphism $x:T\rightarrow X$. The class of points of $X$ is denoted by $|X|_0$. 
	\begin{enumerate}
		\item We say that $\mathscr{C}$ does have enough points if for each pair of morphisms $f,g:X\rightarrow Y$ of $\mathscr{C}$ such that for each point $x:T\rightarrow X$ there is an equivalence $\sigma_x:f\circ x\simeq g\circ x$ in $\mathscr{C}(T,Y)$, then there is an equivalence $\sigma:f\simeq g$ in $\mathscr{C}(X,Y)$.
		\item An object $X\in\mathscr{C}$ does have enough points if one has (1) in the case that $Y=X$.
		\item Take the points $x, y\in |X|_0=\mathscr{C}(T,X)$. A 1-path $p:x\rightarrow y$ in $X$ is a 1-simplex at $\mathscr{C}(T,X)$. The class of the 1-paths of $X$ is denoted by $|X|_1=(\mathscr {C}(T,X))_1$. In general, for each $n\geq 1,$ the class of $n$-paths of $X$ corresponds to $|X|_n=(\mathscr{C}(T,X))_n$. 
	\end{enumerate}
\end{defin}

\begin{rem}\label{remark-infinity-categorical-strucutre}
Note that in Definition \ref{Enough points and n-paths} (3), since $\mathscr{C}$ is an $\infty$-category, all the 1-paths in $X$ (2-simplexes in $X$) are invertible. Then we say that $X\in\mathscr{C}$ has a `homotopic structure'. If $\mathscr{C}$ is an $\infty$-bicategory with a terminal object, we say that an object $X\in\mathscr{C}$ has `$\infty$-categorical structure'.   	
\end{rem}

\begin{defin}[Reflexive object]
	Let $\mathscr{C}$ be a c.c.i. An object $K\in\mathscr{C}$ is called reflexive if  $(K\Rightarrow K)$ is a weak retract of $K$ i.e., there are morphisms
	$$F:K\rightarrow(K\Rightarrow K),\hspace{1cm} G:(K\Rightarrow K)\rightarrow K$$  
	such that there is a natural equivalence $\varepsilon:FG\rightarrow id_{(K\Rightarrow K)}$. 
	
	\medskip If there is a natural equivalence $\eta:id_K\rightarrow GF$, then $K$ is an extensional object.	
\end{defin}

\begin{defin}[Homotopic $\lambda$-model]
	A  homotopic $\lambda$-model of a c.c.i. $\mathscr{C}$ is a quadruple $ \mathscr{K}=\langle K,F,G,\varepsilon\rangle$  where $K\in\mathscr{C}$ is a reflexive object via $F$, $G$ and $\varepsilon$ of the definition above. The quintuple $\mathscr{K}=\langle K,F,G,\varepsilon,\eta\rangle$ is an extensional homotopic $\lambda$-model, with $\eta$ being the natural equivalence of the same definition. 
\end{defin}

\begin{rem}
	By virtue of the Remark \ref{remark-infinity-categorical-strucutre}, if  $U$ is a reflexive object in a cartesian closed $\infty$-bicategory $\mathscr{C}$, we say that $K$ is an `$\infty$-categorical $\lambda$-model'. 
\end{rem}

\section{Kleisli $\infty$-categories} 

Next, we define the Kleisli structures on the $\infty$-categories, a general and direct version of those initially introduced by \cite{Hyland14} for the case of bicategories.

\begin{defin}[Kleisli structure]\label{Kleisli-strucutre}
	Let $\mathcal{K}$ be an $\infty$-category and $\mathcal{A}$ be an $\infty$-category contained in $\mathcal{K}$. A Kleisli structure $P$ on $\mathcal{A}\subseteq \mathcal{K}$ is the following.
	\begin{itemize}
		\item For each vertex $a\in\mathcal{A}$ an arrow $y_a:a\rightarrow Pa$ in $\mathcal{K}$.
		\item For each $a,b \in\mathcal{A}$ a functor
		$$\mathcal{K}(a,Pb)\rightarrow\mathcal{K}(Pa,Pb),\hspace{0.5cm}f\mapsto f^{\#}.$$
		\item A subcategory $\mathcal{K}^{L}\subseteq\mathcal{K}$ such that for all the vertices $a,b\in\mathcal{A}$, the homotopy equivalence
		$$\xymatrix{
			\mathcal{K}(a,Pb)\ar@<1ex>[r]^{(-)^\#} & \mathcal{K}^L(Pa,Pb).\ar@<1ex>[l]^{(-)y_a}
		}$$
	
		such for each horn $(g^\#,-,f^\#):\Lambda_1^2\rightarrow\mathcal{K}^L$  one has the equality of fibres 
		$$F_{g^\#,f^\#}^L=F_{g^\#,f^\#},$$
		where $F_{g^\#,f^\#}^L$ and $F_{g^\#,f^\#}$ are the fibres of the canonical maps $Fun(\Delta^2,\mathcal{K}^L)\rightarrow Fun(\Lambda^2_1,\mathcal{K}^L)$ and $Fun(\Delta^2,\mathcal{K})\rightarrow Fun(\Lambda^2_1,\mathcal{K})$ respectively, with $f:a\rightarrow Pb$ and $g:b\rightarrow Pc$ being edges in $\mathcal{K}$.

	\end{itemize}   
\end{defin}

It is clear that $P$ is a functor from  $\mathcal{A}$ to $\mathcal{K}$ such that for each 1-simplex $f:a\rightarrow b$ of $\mathcal{A}$,  sets $Pf=(y_bf)^{\#}:Pa\rightarrow Pb$.

\begin{ejem}
	The functor $P:Cat_\infty\rightarrow CAT_\infty$, given by $PA=[A^{op},\mathscr{S}]$, is a Kleisli structure on $Cat_\infty\subseteq CAT_\infty$.

	\medskip For each small $\infty$-category $A$, the arrow $y_A:A\rightarrow PA$ is the Yoneda embedding. For every functor $f:A\rightarrow PB$ there exists the functor $f^\#:PA\rightarrow PB$ which preserves small colimits. Besides, one has the categorical equivalence 
	$$\xymatrix{
		Fun(A,PB)\ar@<1ex>[r]^{(-)^\#} & Fun^L(PA,PB),\ar@<1ex>[l]^{(-)y_A}
	}$$
	where $Fun^L(PA,PB)$ is the $\infty$-category of functors which preserve small colimits \cite{DBLP:books/mk/Lurie}. If we take the subcategory $\mathscr{P}r^L\subseteq CAT_\infty$, whose objects are presentable $\infty$-categories and morphisms are functors which preserve small colimits, the categorical equivalence above is restricted to the homotopy equivalence  
	$$\xymatrix{
	CAT_\infty(A,PB)\ar@<1ex>[r]^{(-)^\#} & \mathscr{P}r^L(PA,PB).\ar@<1ex>[l]^{(-)y_A}
}$$
\end{ejem}

\begin{defin}[Kleisli $\infty$-category]
	Take a Kleisli structure $P$ on $\mathcal{A}\subseteq \mathcal{K}$. Define the Kleisli $\infty$-category $Kl(P)$ as follows. The objects  of $Kl(P)$ are the objects of $\mathcal{A}$ and the morphism spaces is defined by $$Kl(P)(a,b):=\mathcal{K}(a,Pb).$$
	for the all objects $a,b\in\mathcal{A}$.
	\end{defin}

\begin{rem}
	By Definition \ref{Kleisli-strucutre}, the homotopy equivalence $Kl(P)(a,b)\rightarrow\mathcal{K}^L(Pa,Pb),$ gets to establish that $KL(P)$ is an $\infty$-category embedded in $\mathcal{K}^L$.
	Another interesting way would be to define $Kl(P)$ as a weighted colimit or the pushout of diagram $$(Cat_\infty\subseteq CAT_\infty\xleftarrow{P}Cat_\infty)$$
	in the category of simplicial sets $Set_{\Delta}=Fun(\Delta^{op},Set)$ (complete and cocomplete) and so $Kl(P)$ would be an $\infty$-category.
\end{rem}

\begin{prop}
	$P(A\times B)\simeq PA\otimes PB$ in the $\infty$-category $\mathscr{P}r^L$.
\end{prop}
\begin{proof}
	\begin{align*}
Fun^L(P(A\times B),C)&\simeq Fun(A\times B,C) \\
	&\simeq Fun(A,C^B) \\
	&\simeq Fun^L(PA,Fun^L(PB,C)) 
    \end{align*}
	Thus, $\mathscr{P}r^L(P(A\times B),C)\simeq\mathscr{P}r^L(PA,PB\multimap C)\simeq\mathscr{P}r^L(PA\otimes PB,C).$
\end{proof}

\section{A cartesian closed $\infty$-category (c.c.i) with enough points}

In this section we prove that the Kleisli $\infty$-category $Kl(P)$ generated by the structure $P$ on $Cat_\infty\subseteq CAT_\infty$ is cartesian closed and has enough points. Thus $Kl(P)$ is a candidate for a higher $\lambda$-model.  

\begin{lem}
	The $\infty$-category $Kl(P)$ is cartesian closed.
\end{lem}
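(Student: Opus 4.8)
The plan is to establish cartesian closure of $Kl(P)$ by transporting the cartesian closed structure that $\mathscr{P}r^L$ already carries through the Kleisli construction, using the two monoidal facts proved earlier in the excerpt. Recall that morphisms in $Kl(P)$ from $a$ to $b$ are, by definition, the morphisms $a \to Pb$ in $\mathcal{K}$, so the hom-$\infty$-categories of $Kl(P)$ are computed inside $\mathcal{K}$ (here $\mathcal{K} = CAT_\infty$ with $\mathcal{A} = Cat_\infty$ and $PA = [A^{op},\mathscr{S}]$). Consequently, for objects $A, B$ one has
\begin{align*}
Kl(P)(A, B) &\simeq \mathcal{K}(A, PB) \\
&= Fun(A, PB) \\
&\simeq Fun^L(PA, PB),
\end{align*}
where the last equivalence is exactly the universal property of $P$ used in the preceding propositions. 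This identifies the hom-spaces of $Kl(P)$ with the colimit-preserving functor categories of $\mathscr{P}r^L$, which is the bridge I intend to exploit.

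First I would define the product on $Kl(P)$ to be the cartesian product $A \times B$ inherited from $\mathcal{A} = Cat_\infty$, and define the exponential $[A \Rightarrow B]$ so that $P[A \Rightarrow B] \simeq PB \multimap PA$ matches the internal hom of the symmetric monoidal closed structure on $\mathscr{P}r^L$ (whose tensor is $\otimes$ and whose internal hom is $\multimap = Fun^L(-,-)$). The core verification is then the chain of equivalences
\begin{align*}
Kl(P)(A \times B, C) &\simeq Fun^L(P(A \times B), PC) \\
&\simeq Fun^L(PA \otimes PB, PC) \\
&\simeq Fun^L(PA, PB \multimap PC) \\
&\simeq Fun^L(PA, P[B \Rightarrow C]) \\
&\simeq Kl(P)(A, [B \Rightarrow C]),
\end{align*}
where the second step is precisely the earlier Proposition $P(A \times B) \simeq PA \otimes PB$, the third step is the tensor-hom adjunction in $\mathscr{P}r^L$, and the outer steps are the hom-space identification above. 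I would check that this equivalence is natural in all three variables so that it genuinely exhibits $[B \Rightarrow C]$ as an exponential object.

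The main obstacle is the definition of the exponential object $[A \Rightarrow B]$ itself: the displayed chain produces the internal hom $PB \multimap PC$ living in $\mathscr{P}r^L$, but to land in $Kl(P)$ I need a small $\infty$-category $[B \Rightarrow C] \in \mathcal{A}$ whose presheaf category recovers it, i.e. $P[B \Rightarrow C] \simeq PB \multimap PC$. This requires that the colimit-preserving functor category $Fun^L(PB, PC)$ be again a presheaf $\infty$-category on a small base, which holds because both $PB$ and $PC$ are presheaf (hence presentable) $\infty$-categories and $\mathscr{P}r^L$ is closed under its internal hom; one then takes $[B \Rightarrow C]$ to be a small generating $\infty$-category for $PB \multimap PC$, so that applying $P$ recovers the internal hom up to the equivalence $P^\kappa C \simeq A$ noted in the examples. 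Verifying that this choice is well-defined and that the resulting adjunction isomorphism is compatible with Kleisli composition (rather than merely holding objectwise) is where the real care is needed; the remaining bookkeeping is routine given the monoidal structure of $\mathscr{P}r^L$.
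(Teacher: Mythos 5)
Your overall strategy is sound and is really the paper's own argument in monoidal dress: the chain
$Kl(P)(A\times B,C)\simeq Fun^L(P(A\times B),PC)\simeq Fun^L(PA\otimes PB,PC)\simeq Fun^L(PA,PB\multimap PC)$
is, after unwinding the equivalence $Fun^L(PX,Y)\simeq Fun(X,Y)$, exactly the currying chain
$Fun(A\times B,PC)\simeq Fun(A,[B,PC])$ that the paper writes down directly. But there is a genuine gap at precisely the point you yourself flag as the main obstacle: producing a small $\infty$-category $[B\Rightarrow C]\in\mathcal{A}$ with $P[B\Rightarrow C]\simeq PB\multimap PC$. Your proposed resolution --- take ``a small generating $\infty$-category for $PB\multimap PC$'' and apply $P$ --- does not work. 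Applying the free cocompletion $P$ to a small generating subcategory of a presentable $\infty$-category does not recover that category: $P$ freely adjoins colimits and ignores the colimits already present, so for instance $P(P^{\kappa}A)\not\simeq PA$ in general; what holds is $Ind_\kappa(P^\kappa A)\simeq PA$, and the equivalence $A\simeq P^\kappa C$ you cite from the examples is about identifying $\kappa$-compact objects, not about reconstructing a presentable category via $P$. As stated, the adjunction chain therefore terminates in an object of $\mathscr{P}r^L$ that you have not exhibited as $P$ of anything in $\mathcal{A}$, so you have not produced an exponential object \emph{in} $Kl(P)$.

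The fix is short and is what the paper does: the internal hom is itself a presheaf category on an explicitly named small base, namely
$$PB\multimap PC=Fun^L(PB,PC)\simeq Fun(B,PC)\simeq Fun(B\times C^{op},\mathscr{S})=P(B^{op}\times C),$$
so one simply sets $[B\Rightarrow C]:=B^{op}\times C$, which lies in $\mathcal{A}=Cat_\infty$ whenever $B$ and $C$ do. With that identification your chain closes up as $Fun^L(PA,P(B^{op}\times C))\simeq Fun(A,P(B^{op}\times C))=Kl(P)(A,B^{op}\times C)$, and the remaining naturality and compatibility-with-Kleisli-composition checks you mention are the same ones the paper leaves implicit.
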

\begin{proof}
	\begin{align*}
	Fun(A\times B, PC)&\simeq Fun(A , [B,PC]) \\
	&= Fun(A , [B,{\mathscr{S}}^{C^{op}}]) \\
	&\simeq Fun(A , [B\times C^{op},\mathscr{S}]) \\
	&= Fun(A , P(B^{op}\times C)).
	\end{align*}
	Thus $Kl(P)[A\times B,C]\simeq Kl(P)(A,B^{op}\times C)=Kl(P)(A,B\Rightarrow C)$.
\end{proof}

\begin{teor}\label{points-enough-theorem}
	The $\infty$-category $Kl(P)$ does have enough points. 	
\end{teor}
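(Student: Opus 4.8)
The plan is to unwind the definition of $Kl(L^\ast_P)$ so that the statement becomes a comparison of functors valued in a presheaf $\infty$-category, and then to detect equivalence of such functors on points. First I would record that, by the definition of the Kleisli $\infty$-category together with $L^\ast_P A\simeq L^\ast A$ on objects, a morphism $f\colon A\to C$ of $Kl(L^\ast_P)$ is exactly a functor $L^\ast A\to PC$ in $\mathcal{K}$; thus a pair $f,g\colon A\to C$ is a pair of functors $L^\ast A\to PC$. Using the universal property displayed before this theorem, namely $Fun^\kappa(L^\ast A,PC)\simeq Fun(A,PC)$, together with the idempotency $L^\ast L^\ast A\simeq L^\ast A$ already exploited in the cartesian closure proof, I would replace $f$ and $g$ by the $\kappa$-small-colimit-preserving functors that restrict to them along $A\hookrightarrow L^\ast A$; this is the step that makes points effective, since a cocontinuous functor out of $L^\ast A$ is determined by its restriction to the generators $A$.

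Next I would analyse the hypothesis $fx\simeq gx$. A point $x\colon\triangle^0\to A$ of $Kl(L^\ast_P)$ is a functor $L^\ast\triangle^0\to PA$, and the composite $fx\colon\triangle^0\to C$ is computed by the Kleisli composition, i.e.\ by the colimit-preserving extension $(-)^\#$ of $f$ evaluated at $x$. I would first check the claim on the representable points, those $x$ factoring through $y_a\colon a\to Pa$ for a vertex $a$ of $A$: for these one has $fx\simeq f(a)$ and $gx\simeq g(a)$ in $PC$, so the hypothesis yields $f(a)\simeq g(a)$ for every object $a$ of $A$. Because $PC=[C^{op},\mathscr{S}]$ and equivalences in $\mathscr{S}$ are detected objectwise, this already forces $f(a)(c)\simeq g(a)(c)$ for all $a$ in $A$ and $c$ in $C$.

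The remaining and principal step is to upgrade this objectwise agreement to an equivalence $f\simeq g$ of functors $A\to PC$ (equivalently of their cocontinuous extensions $L^\ast A\to PC$). Here I would invoke that $\mathscr{S}$ is generated under small colimits by the point $\triangle^0$, so that the family of all points $x\colon\triangle^0\to A$ of $Kl(L^\ast_P)$ --- running over the whole of $Fun(L^\ast\triangle^0,PA)$, not merely the representables --- is jointly conservative; combined with the Yoneda/density presentation of $PA$, the data $(fx)_x$ determine $f^\#$ up to equivalence, and likewise for $g$, whence $f^\#\simeq g^\#$ and so $f\simeq g$.

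I expect the genuine obstacle to be exactly this last upgrade: in the $\infty$-categorical setting a family of objectwise equivalences does not, a priori, assemble into a natural (coherent) equivalence of functors. To close the gap rigorously one must show that the postcomposition functors $x\mapsto fx$ and $x\mapsto gx$ agree as functors $Kl(L^\ast_P)(\triangle^0,A)\to Kl(L^\ast_P)(\triangle^0,C)$, and not merely on objects; this is where the cocontinuity of the chosen representatives and the joint conservativity of the points must be used together, essentially reducing the required coherence to the enough-points property of $\mathscr{S}$ for colimit-preserving functors.
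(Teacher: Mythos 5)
Your proposal takes essentially the same route as the paper's proof: unwind a Kleisli morphism into a colimit-preserving functor between presheaf $\infty$-categories, identify the points $\triangle^0\to A$ of $Kl(L^\ast_P)$ with objects of $PA$, and pass from objectwise agreement to $f\simeq g$ via compact generation and $\kappa$-continuity. The coherence obstacle you flag at the end is real, but it is equally unaddressed in the paper's own proof, which simply asserts $F\overline{x}\simeq\overline{Fx}\simeq\overline{Gx}\simeq G\overline{x}$ and concludes; so you have matched, and honestly diagnosed the limits of, the published argument.
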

\begin{proof}
	A morphism $A\rightarrow B$ in $Kl(P)$ corresponds to a functor $A\rightarrow PB$. Since $PA$ is a closure of $A$ under small colimits, such a functor corresponds to a small
 colimit
	preserving functor $PA\rightarrow PB$. Since $PA$ and $PB$ are weakly contractible, the functor $PA\rightarrow PB$ is sufficiently determined by all the vertices of $PA$ (points of $A$). 
\end{proof}

In \cite{Martinez21}, one can find some examples of reflexive objects in the category $KL(P)$, which are provided by methods of solving domain equations on arbitrary cartesian closed $\infty$-categories, where these types of equations are called  \textit{Homotopy Domain Equations}.

\section{Conclusion}

What we have done here is a beginning for the construction of a Homotopy Domain Theory (HoDT) which provides techniques to build homotopy $\lambda$-models that allow for the generalization of Church-like conversion relations (such as, e.g., $\beta$-equality, $\eta$-equality)  to higher term-contraction induced equivalences. 

\medskip Besides, we generalized the Kleisli bicategory to a Kleisli $\infty$-category, and we show that it is closed cartesian with enough points.  For future work, we could apply the techniques of HoDT to this Kleisli $\infty$-category and thus obtain a reflexive Kan complex (homotopic $\lambda$-model) with relevant information. 

\medskip On the other hand, we define the interpretation of the $\beta\eta$-contractions in a reflexive Kan complex, whose $\infty$-groupoid structure induces higher $\beta\eta$-contractions which  would inhabit a type of identity (based on computational paths).  This work could be seen as the beginning of the semantics of another version,  based on computational paths, of the Theory of Homotopy Types.

\bibliography{mybibfile}

\begin{thebibliography}{10}
\expandafter\ifx\csname url\endcsname\relax
  \def\url#1{\texttt{#1}}\fi
\expandafter\ifx\csname urlprefix\endcsname\relax\def\urlprefix{URL }\fi
\expandafter\ifx\csname href\endcsname\relax
  \def\href#1#2{#2} \def\path#1{#1}\fi

\bibitem{Martinez22}
D.~Mart\'{\i}nez-Rivillas, R.~de~Queiroz, The theory of an arbitrary higher
  $\lambda$-model, arXiv:2111.07092.

\bibitem{Queiroz}
R.~de~Queiroz, A.~de~Oliveira, A.~Ramos, Propositional equality, identity
  types, and direct computational paths, South American Journal of Logic 2~(2)
  (2016) 245--296.

\bibitem{Ramos2017}
A.~Ramos, R.~de~Queiroz, A.~de~Oliveira, On the identity type as the type of
  computational paths, Logic Journal of the IGPL 25~(4) (2017) 562--584.

\bibitem{Martinez}
D.~Mart\'{\i}nez-Rivillas, R.~de~Queiroz, The $\infty$-groupoid generated by an
  arbitrary topological $\lambda$-model, Logic Journal of the IGPL 30~(3)
  (2022) 465--488 https://doi.org/10.1093/jigpal/jzab015, (also
  arXiv:1906.05729).

\bibitem{Voevodsky}
C.~Kapulkin, P.~Lumsdaine, V.~Voevodsky, The simplicial model of univalent
  foundations, arXiv:1211.2851.

\bibitem{Lumsdaine}
P.~Lumsdaine, M.~Shulman, Semantics of higher inductive types, Mathematical
  Proceedings of the Cambridge Philosophical Society 169 (2020) 159--208.

\bibitem{DBLP:books/mk/Univalent13}
T.~U.~F. Program, Homotopy Type Theory: Univalent Foundations of Mathematics,
  Princeton, NJ: Institute for Advanced Study, 2013.

\bibitem{DBLP:books/mk/Lurie}
J.~Lurie, Higher Topos Theory, Princeton University Press, Princeton and
  Oxford, 2009.

\bibitem{Hyland10}
M.~Hyland, Some reasons for generalizing domain theory, Mathematical Structures
  in Computer Science 20 (2010) 239--265.

\bibitem{DBLP:books/mk/Goerss09}
P.~Goerss, J.~Jardine, Simplicial Homotopy Theory, Birkhäuser Basel, Springer
  Nature Switzerl and AG, 2009.

\bibitem{Friedman2012}
G.~Friedman, An elementary illustrated introduction to simplicial sets, Rocky
  Mountain J. Math. 42~(2) (2012) 353--423.

\bibitem{DBLP:books/mk/Cisinski}
D.-C. Cisinski, Higher Categories and Homotopical Algebra, Cambridge University
  Press, 2019.

\bibitem{DBLP:books/mk/Rezk17}
C.~Rezk, Stuff about quasicategories, Lecture Notes for course at University of
  Illinois at Urbana-Champaign, 2017.

\bibitem{DBLP:books/mk/Groth15}
M.~Groth, \href{https://arxiv.org/abs/1007.2925}{A short course on
  $\infty$-categories}, 2015.
\newline\urlprefix\url{https://arxiv.org/abs/1007.2925}

\bibitem{Joyal2002}
A.~Joyal, Quasi-categories and kan complexes, Journal of Pure and Applied
  Algebra 175~(1) (2002) 207--222.

\bibitem{DBLP:books/mk/Barendregt}
H.~Barendregt, The Lambda Calculus, its Syntax and Semantics, North-Holland
  Co., Amsterdam, 1984.

\bibitem{DBLP:books/mk/HindleyS08}
J.~Hindley, J.~Seldin, Lambda-Calculus and Combinators, an Introduction,
  Cambridge University Press, New York, 2008.

\bibitem{Martinez21}
D.~Mart\'{\i}nez-Rivillas, R.~de~Queiroz, Solving homotopy domain equations,
  arXiv:2104.01195.

\bibitem{Hyland14}
M.~Hyland, Elements of a theory of algebraic theories, Theoretical Computer
  Science 546 (2014) 132--144.

\end{thebibliography}

\end{document}